\documentclass[12pt]{article}

\def\hybrid{\topmargin -20pt  \oddsidemargin 0pt
      \headheight 0pt   \headsep 0pt
      \textwidth 6.25in 
      \textheight 9.5in 
      \marginparwidth .875in
      \parskip 5pt plus 1pt   \jot = 1.5ex}

\hybrid

\usepackage[latin1]{inputenc}
\usepackage[T1]{fontenc}
\usepackage{amsmath}
\usepackage{amsthm}
\usepackage{amssymb}
\usepackage[all]{xy}
\usepackage{amscd,amsmath,amssymb,euscript}

\newtheorem{thm}{Theorem}[section]

\numberwithin{equation}{section}

\theoremstyle{definition}

\newtheorem{ex}[thm]{Example}

\renewcommand{\to}{\xymatrix@1@=15pt{\ar[r]&}}
\renewcommand{\rightarrow}{\xymatrix@1@=15pt{\ar[r]&}}
\renewcommand{\mapsto}{\xymatrix@1@=15pt{\ar@{|->}[r]&}}
\renewcommand{\twoheadrightarrow}{\xymatrix@1@=15pt{\ar@{->>}[r]&}}
\renewcommand{\hookrightarrow}{\xymatrix@1@=15pt{\ar@{^(->}[r]&}}
\newcommand{\congpf}{\xymatrix@1@=15pt{\ar[r]^-\sim&}}

\parindent1em

\DeclareMathOperator{\tr}{tr}

\def\dbar{\bar\partial}

\begin{document}
\thispagestyle{empty}
\rightline{}
\vspace{2truecm}
\centerline{\bf  \Large Heterotic Non-K\"ahler Geometries via}
\vspace{.3truecm}
\centerline{\bf \Large Polystable Bundles on Calabi-Yau Threefolds}

\vspace{1.5truecm}
\centerline{Bj\"orn Andreas\footnote{andreas@math.hu-berlin.de}
and Mario Garcia-Fernandez\footnote{mariogf@qgm.au.dk, Supported by QGM (Centre for Quantum Geometry of Moduli Spaces) funded by the Danish National Research Foundation.}}

\vspace{.6truecm}

\centerline{$^1${\em Institut f\"ur Mathematik}}
\centerline{{\em Humboldt-Universit\"at zu Berlin}}
\centerline{{\em Rudower Chaussee 25, 12489 Berlin, Germany}}

\vspace{.3truecm}
\centerline{$^2${\em Center for Quantum Geometry of Moduli Spaces}}
\centerline{{\em Department of Mathematical Sciences, Aarhus University}}
\centerline{{\em Ny Munkegade 118, bldg. 1530, DK-8000 Aarhus C,
Denmark}}

\vspace{1.0truecm}

\begin{abstract}
In arXiv:1008.1018 it is shown that a given stable vector bundle $V$ on a Calabi-Yau threefold $X$ which satisfies $c_2(X)=c_2(V)$ can be deformed to a solution of the Strominger system and the equations of motion
of heterotic string theory. In this note we extend this result to the polystable case and construct explicit examples of polystable bundles on elliptically fibered Calabi-Yau threefolds where it applies. The polystable bundle is  given by a spectral cover bundle, for the visible sector, and a suitably chosen bundle, for the hidden sector. This provides a new class of heterotic flux compactifications via non-K\"ahler deformation of Calabi-Yau geometries with polystable bundles. As an application, we obtain examples of non-K\"ahler deformations of some three generation GUT models.
\end{abstract}

\newpage
\section{Introduction}
The geometry of the internal six-dimensional manifold of a supersymmetric compactification of the heterotic
string with non-zero torsion is characterized by a hermitian $(1,1)$-form $\omega$, a holomorphic $(3,0)$-form $\Omega$ and a non-abelian gauge field strength $F$. Supersymmetry and anomaly cancellation
constrain possible triples $(\omega, \Omega, F)$ which must satisfy
\begin{align}\label{stromsys}
F^{2,0}=F^{0,2}=0, \ \ \  F\wedge \omega^2&=0,\\
d(||\Omega||_\omega \omega^2) & = 0,\label{eq:epsilonconfbalanced}\\
i\partial\dbar\omega - \alpha'(\tr(R\wedge R)-\tr(F\wedge F))&=0\label{anom}.
\end{align}
This system of equations (usually called {\it the Strominger system}) provides the necessary
and sufficient conditions for $N=1$ space-time supersymmetry in four dimensions \cite{Strom}.
Moreover, in \cite{Strom} it is shown that given a solution of this system, the physical fields can be expressed in terms of $(\omega, \Omega, F)$.

First examples of solutions of the Strominger system have been obtained in \cite{Yau1, FuYau, FTY, LiYau}. One difficulty in obtaining smooth solutions of the Strominger system lies in the fact that many theorems of K\"ahler geometry and thus methods of algebraic geometry do not apply. Therefore one approach to obtain solutions is to simultaneously perturb K\"ahler and Hermitian-Yang-Mills (HYM) metrics and so avoid the direct construction of non-K\"ahler manifolds with stable bundles. This approach has been used in \cite{LiYau} where it is shown that a deformation of the holomorphic structure on the direct sum of the tangent bundle and the trivial bundle and of the K\"ahler form of a given Calabi-Yau threefold leads to a smooth solution of the Strominger system whereas the original Calabi-Yau space is perturbed to a non-K\"ahler space. For other constructions we refer to \cite{DaRa, GoPro, BeBe, BeSe}.

The aim of this note is to obtain solutions of the Strominger system using a perturbative method which has been inspired by the method developed in \cite{LiYau} and recently used to construct solutions in \cite{AGF}. Our starting point will be here a solution of the Strominger system in the large radius limit, given by any polystable vector bundle $V_1\oplus V_2$ on a Calabi-Yau threefold $X$ with $SU(3)$ holonomy which satisfies the second Chern class constraint
\begin{equation}\label{secondC}
c_2(X)=c_2(V_1)+c_2(V_2),
\end{equation}
where $V_2$ corresponds to a gauge bundle in the {\it hidden sector} of the $E_8\times E_8$ heterotic string theory.
This solution is then perturbed to a solution of the system using the implicit function theorem similarly as in
\cite{AGF}. Moreover, the solutions obtained satisfy also the equations of motion derived from the effective action of the heterotic string. For this we use a result of \cite{FIVU} and extend the original Strominger system as in \cite{AGF} by an instanton condition
\begin{equation}\label{eq:epsiloninstR}
R^{2,0}=R^{0,2}=0, \ \ \ R\wedge \omega^2=0.
\end{equation}
This condition is obtained in~\cite{FIVU} considering the equations of motion of the heterotic string up to two-loops in sigma model perturbation theory as exact. Now having obtained a supersymmetric solution to 
the low energy field theory approximation of heterotic string theory, one can argue that such solutions can be completed to solutions to all orders of the fully $\alpha'$-corrected equations (cf. \cite{Wi1, WiWi, WuWi}) using non-renormalization theorems for the effective four-dimensional superpotential.

In this note we construct examples where the deformation result applies, given by a class of polystable bundles on an elliptically fibered Calabi-Yau threefold with
$V_1$ given by a particular spectral cover bundle and $V_2$ given by a suitably chosen pullback bundle.
Note that due to the appropriate choice of the hidden sector bundle $V_2$ we avoid the otherwise necessary introduction of background five-branes which would lead to a further modification of the anomaly constraint
(for a discussion on this see \cite{AGF}). For another class of hidden sector bundles we obtain
examples of non-K\"ahler deformations of some three generation GUT models.

\section{Solutions via Polystable Bundles}

In this section we prove an existence result for the Strominger system and the equations of motion in the polystable case, adapting a previous result in \cite{AGF}. As in~\cite{FIVU}, we consider the equations of motion of the heterotic string up to two-loops in sigma model perturbation theory as exact. We use the standard terminology in the mathematical literature and, for simplicity, we restrict to the case in which the Calabi-Yau threefold has strict $SU(3)$ holonomy. 

Let $X$ be a compact complex threefold with nowhere vanishing $(3,0)$-holomorphic form $\Omega$. Let $V$ be a holomorphic vector bundle over $X$ satisfying $c_2(V) = c_2(X)$ with rank $r$ and $c_1(V) = 0$. Consider the coupled system given by~\eqref{stromsys},~\eqref{eq:epsilonconfbalanced},~\eqref{anom} and~\eqref{eq:epsiloninstR} where $F$ is the curvature of a (unitary) connection $A$ on $V$, $\omega$ is a hermitian form on $X$ and $R$ is the curvature of a $\omega$-unitary connection $\nabla$ on $TX$.

Suppose that $X$ admits a Calabi-Yau metric $\omega_0$ with strict $SU(3)$ holonomy and irreducible Chern connection $\nabla_0$ and that $V$ is polystable with respect to $[\omega_0]$. Then,
\begin{equation}\label{eq:stables}
V = \oplus_{j=1}^k V_j
\end{equation}
is a direct sum of stable bundles with respect to $[\omega_0]$. We claim that there exists a family of solutions $(A_\lambda,\omega_\lambda,\nabla_\lambda)$ of~\eqref{stromsys},~\eqref{eq:epsilonconfbalanced},~\eqref{anom} and~\eqref{eq:epsiloninstR} parameterized by $\lambda \in ]\lambda_0,+\infty[$ such that $(A_\lambda,\omega_\lambda/\lambda,\nabla_\lambda)$ converges to a HYM connection on $V$, the Calabi-Yau metric $\omega_0$ and its Chern connection $\nabla_0$, respectively. In order to prove this result, following~\cite{AGF} we rescale the hermitian form $\omega$ by a positive constant $\lambda > 0$ and consider a new system of equations, the $\epsilon$-system, defined by the change of variable $\alpha' \to \epsilon = \alpha'/\lambda$ in~\eqref{anom}. Therefore, any solution of the $\epsilon$-system with $\epsilon > 0$ is related to the original system after rescaling. In the limit $\lambda\to\infty$ a solution of the $\epsilon$-system is given by a degree zero polystable holomorphic vector bundle on a Calabi-Yau threefold. This way the pair $(X,V)$ is regarded as a solution of the initial system in the large radius limit.

We prove now that $(X,V)$, the given solution for $\epsilon=0$, can be perturbed to a solution with small $\epsilon > 0$, i.e., with large $\lambda$. More precisely, the K\"ahler form of the given Calabi-Yau threefold is perturbed to a conformally balanced Hermitian form on the fixed complex manifold while also perturbing its Chern connection and the unique HYM connection on the bundle $V$, whereas we preserve the HYM condition. The argument is analogue to that in~\cite{AGF}, so we give a brief sketch here and refer to~\cite{AGF} for more details. We consider a $1$-parameter family of maps between (suitable) fixed Hilbert spaces
$$
\textbf{L}^\epsilon\colon \mathcal{V}_1 \to \mathcal{V}_2,
$$
indexed by $\epsilon \in \mathbb{R}$ which acts on triples $(A,\omega,\nabla) \in \mathcal{V}_1$ and whose zero locus corresponds to solutions of the $\epsilon$-system. Roughly speaking, $\textbf{L}^\epsilon$ is defined by sending a triple $(A,\omega,\nabla)$ to the $4$-tuple given by the LHS of~\eqref{stromsys},~\eqref{eq:epsilonconfbalanced},~\eqref{anom} (with $\alpha' \to \epsilon = \alpha'/\lambda$) and~\eqref{eq:epsiloninstR}. This way, the map $\textbf{L}^\epsilon$ splits into a direct sum
$$
\textbf{L}^\epsilon = \textbf{L}_1\oplus\textbf{L}_2^\epsilon\oplus\textbf{L}_3\oplus\textbf{L}_4
$$
and the equation $\textbf{L}^\epsilon = 0$ with $\epsilon \neq 0$ corresponding to the $\epsilon$-system is related to a simpler one at $\epsilon = 0$ which has a known solution $(A_0,\omega_0, \nabla_0)$. As the initial bundle $V$ is polystable we start with a reducible connection $A_0 = \oplus^j A_j$, with irreducible HYM pieces $A_j$ corresponding to the direct sum decomposition~\eqref{eq:stables} into stable subbundles. Then, for the perturbative argument we consider connections $A$ which are direct sum of connections on the subbundles $E_j$ (for the definition of the Hilbert spaces $\mathcal{V}_1$, $\mathcal{V}_2$ above) and split the map $\textbf{L}_1 = \oplus^j\textbf{L}_{1,j}$ in the obvious way. Then, the differential
$$
\delta_0 \textbf{L}^0\colon \mathcal{V}_1 \to \mathcal{V}_2,
$$
at the initial solution is an isomorphism (after composition with a suitable orthogonal projection) and an implicit function theorem argument shows that there is a solution of $\textbf{L}^\epsilon = 0$ nearby $(A_0,\omega_0, \nabla_0) \in \mathcal{V}_1$ in an open neighborhood of $\epsilon = 0 \in \mathbb{R}$. Therefore, we obtain the following result.

\begin{thm}\label{thm:existence0}
Let $V$ be a degree zero holomorphic vector bundle over a compact Calabi-Yau threefold $(X,\omega_0)$ with holonomy equals to $SU(3)$. If $c_2(V) = c_2(X)$ and $V$ is polystable with respect to $[\omega_0]$, then there exists $\lambda_0 \gg 0$ and a $C^1$ curve
$$
]\lambda_0,+\infty[ \ni \lambda \to (A_\lambda,\omega_\lambda,\nabla_\lambda)
$$
of solutions of the Strominger system and the equations of motion derived from the effective heterotic string action such that $\dbar_\lambda$ is isomorphic to $\dbar_0$ for all $\lambda$ and $(A_\lambda,\frac{\omega_\lambda}{\lambda},\nabla_\lambda)$ converges uniformly to the solution $(A_0,\omega_0,\nabla_0)$ when $\lambda \to \infty$.
\end{thm}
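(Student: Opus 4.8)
The plan is to produce the claimed family as the image under $\lambda = \alpha'/\epsilon$ of a curve $\epsilon \longmapsto (A_\epsilon, \omega_\epsilon, \nabla_\epsilon)$ of solutions of the $\epsilon$-system, the latter obtained by an implicit function theorem argument based at the $\epsilon = 0$ solution. First I would fix Sobolev completions $\mathcal{V}_1$ and $\mathcal{V}_2$ of the space of triples $(A,\omega,\nabla)$ and of the target $4$-tuples, chosen so that $\textbf{L}^\epsilon$ --- built from the left-hand sides of \eqref{stromsys}, \eqref{eq:epsilonconfbalanced}, \eqref{anom} (with $\alpha'$ replaced by $\epsilon$) and \eqref{eq:epsiloninstR} --- is a smooth map of Banach manifolds depending affinely on $\epsilon$ through the anomaly term alone. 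The decisive structural choice is to restrict the configuration space for $A$ to connections compatible with \eqref{eq:stables}, i.e.\ $A = \oplus_j A_j$ with $A_j$ a unitary connection on $V_j$. This restriction is consistent because $A$ enters the remaining equations only through $\tr(F \wedge F) = \sum_j \tr(F_j \wedge F_j)$, so the block-diagonal ansatz is preserved by $\textbf{L}^\epsilon$ and the off-diagonal summands $\mathrm{Hom}(V_i,V_j)$ of $\mathrm{End}(V)$ are never excited.

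Second, I would identify the base point at $\epsilon = 0$. There the anomaly equation collapses to $i\partial\dbar\omega = 0$, so the Calabi-Yau form $\omega_0$ solves the $\omega$-equations \eqref{eq:epsilonconfbalanced} and \eqref{anom}, its Chern connection $\nabla_0$ solves \eqref{eq:epsiloninstR}, and by the Donaldson-Uhlenbeck-Yau theorem applied summand-by-summand each stable $V_j$ carries a unique irreducible HYM connection $A_j$, whose direct sum $A_0 = \oplus_j A_j$ solves \eqref{stromsys}. I would then compute $\delta_0 \textbf{L}^0$ and observe the key simplification peculiar to $\epsilon = 0$: since $\epsilon$ multiplies the curvature terms in the anomaly, their variations drop out and the linearized $\omega$-equations depend on $\delta\omega$ alone, while the HYM and tangent-bundle instanton equations depend on $(\delta A, \delta\omega)$ and $(\delta\nabla, \delta\omega)$ respectively. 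Ordering the variables as $(\delta\omega, \delta A, \delta\nabla)$, the linearization is therefore block-triangular, and its invertibility reduces to that of the three diagonal blocks: a $\partial\dbar$-Laplacian-type block on Hermitian forms, the HYM block $\textbf{L}_1 = \oplus_j \textbf{L}_{1,j}$ on $\mathrm{End}(V)$, and an analogous HYM block on $\mathrm{End}(TX)$.

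The main obstacle, and where I expect the real work to lie, is establishing that $\delta_0 \textbf{L}^0$ is an isomorphism once composed with a suitable orthogonal projection that quotients out the unavoidable degeneracies. Three of these must be controlled at once: the $k$-dimensional kernel of the block-diagonal HYM operator coming from the scalar automorphisms $\oplus_j \mathbb{C}\cdot\mathrm{id}_{V_j}$ of the reducible bundle --- the genuinely new feature relative to \cite{AGF}, where the bundle is simple and $k = 1$; the analogous scalar kernel of the $\mathrm{End}(TX)$ block, harmless since $\nabla_0$ is assumed irreducible; and the global $\partial\dbar$-exactness constraint carried by the anomaly equation, whose solvability is precisely guaranteed by the hypothesis $c_2(V) = c_2(X)$. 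I would dispose of the reducibility by projecting off the $k$ scalar modes in the domain and matching them against the corresponding cokernel directions, which reduces the Fredholm step to $k$ independent copies of the stable analysis already carried out in \cite{AGF}; since each $V_j$ is stable hence simple, the kernel and cokernel of $\textbf{L}_{1,j}$ are exactly the constant scalars, and the block-triangular coupling to the $\omega$- and $\nabla$-equations does not spoil invertibility. With $\delta_0 \textbf{L}^0$ an isomorphism onto the projected target, the implicit function theorem yields a $C^1$ curve of zeros of $\textbf{L}^\epsilon$ near $\epsilon = 0$; elliptic regularity promotes them to smooth solutions, undoing the rescaling produces the family over $]\lambda_0,+\infty[$, continuity at $\epsilon = 0$ gives the uniform convergence of $(A_\lambda, \omega_\lambda/\lambda, \nabla_\lambda)$ to $(A_0, \omega_0, \nabla_0)$, and holding the holomorphic structure fixed throughout keeps $\dbar_\lambda$ within the isomorphism class of $\dbar_0$.
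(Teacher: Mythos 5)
Your proposal follows essentially the same route as the paper: rescaling to the $\epsilon$-system, restricting the configuration space to direct-sum connections $A = \oplus_j A_j$ so that $\textbf{L}_1$ splits as $\oplus_j \textbf{L}_{1,j}$, showing $\delta_0 \textbf{L}^0$ is an isomorphism after composing with a suitable orthogonal projection, and applying the implicit function theorem at the known $\epsilon=0$ solution $(A_0,\omega_0,\nabla_0)$ before undoing the rescaling. Your extra detail on the block-triangular linearization, the scalar kernels $\oplus_j \mathbb{C}\cdot\mathrm{id}_{V_j}$, and the $\partial\dbar$-exactness role of $c_2(V)=c_2(X)$ fills in precisely what the paper's sketch defers to \cite{AGF}, and is consistent with it.
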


The perturbation provided by Theorem~\ref{thm:existence0} leaves the holomorphic structure of $V$ unchanged while the one on $TX$ is shifted by a complex gauge transformation and so remains isomorphic to the initial one. Since the initial bundle is polystable the solutions of the hermitian-Yang-Mills equations obtained in the perturbation may be reducible.

\section{Application to Elliptic Calabi-Yau Threefolds}

One motivation for this note was to apply the method developed in \cite{AGF}
to the widely explored class of spectral cover bundles which has been originally constructed in \cite{FMW1, FMW2}.
As shown in \cite{Pantev}, these bundles are not capable to solve $c_2(X)=c_2(V)$ but do solve the generalized
anomaly constraint $c_2(X)-c_2(V)=[W]$
with $[W]$ an effective curve class \cite{FMW1}. Here $[W]\neq 0$ corresponds to the presence of background five-branes in the heterotic vacuum. One possibility to avoid background five-branes (as these cause a delta function contribution to the anomaly equation, cf. discussion in \cite{AGF}) is to search for stable bundles whose second Chern class is equal to $[W]$ corresponding
to turning on a {\it hidden sector} bundle which can then be used to break part of the hidden $E_8$ gauge symmetry.
(The general question to give sufficient conditions for the existence of stable bundles with prescribed Chern classes has been studied in \cite{DRY} and \cite{AC10}).

Following this approach, in this section we will construct a class of polystable bundles $V\oplus \pi^*E$ where
$V$ is given by a spectral cover bundle and $\pi^*E$ a pullback bundle. For this we first briefly
recall the spectral cover construction of \cite{FMW1} and then prove that $\pi^*E$
can be chosen to be stable with respect to the same K\"ahler class $\omega$ as $V$, that is, $\omega=t\omega_0+\pi^*\omega_B$ (cf.below). We then provide conditions
such that $V\oplus \pi^*E$ satisfies the second Chern class constraint (\ref{secondC}) imposed by anomaly cancellation.

On elliptically fibered Calabi-Yau threefolds $\pi\colon X\to B$ with section $\sigma$ a general construction of stable vector bundles
was given in \cite{FMW1}. We will consider here the case of stable $SU(n)$ bundles $V$. Such bundles
can be explicitly constructed from a divisor $C$ in $X$, called the spectral cover, and a line bundle $L$
on $C$. The spectral cover $C$ is a surface in $X$ that is an $n$-fold cover of the base $B$. More precisely,
$C$ is an element of the linear system $|n\sigma+\pi^*\eta|$ of ${\mathcal O}_X(n\sigma+\pi^*\eta)$ whereas $\eta$ is a curve class in $B$. Let us briefly recall the conditions on the spectral data $(C,L)$ such that $V$ is a stable $SU(n)$ bundle. For more details see \cite{FMW1}.

For $SU(n)$ bundles the first Chern class of $L$ is given by \cite{FMW1}
\begin{equation}
c_1(L)=n(\frac{1}{2}+\mu)\sigma+(\frac{1}{2}-\mu)\pi^*\eta+(\frac{1}{2}+n\mu)\pi^*c_1(B)
\end{equation}
since $c_1(L)$ must be an integer class, it follows that: if $n$ is odd, then $\mu$ is strictly half
integral and for $n$ even an integral $\mu$ requires
$\eta\equiv c_1 \  ({\rm mod}\  2)$ while a strictly
half-integral $\mu$ requires $c_1(B)$ even.

Now given the spectral data $(C,L)$, there is an unique $SU(n)$ bundle $V$
which can be constructed canonically in two ways. As in \cite{FMW1}, $V$ can be directly constructed as
$$V=\pi_{1*}(\pi_2^*L\otimes {\mathcal P})$$
where ${\mathcal P}$ is the associated Poincar\'e bundle and $\pi_i$ are the two projections of the fiber
product $X\times_B C$ onto the two factors $X$ and $C$. Equivalently, one can construct $V$ directly
from the spectral data using a relative Fourier-Mukai transform (for a survey see \cite{AHR}).

In Theorem 7.1 in \cite{FMW2} it is shown that for irreducible $C$ the bundle $V$ is stable with respect to the K\"ahler class $\omega=t \omega_0+\pi^*\omega_B$ with $\omega_0$ some fixed K\"ahler class in $X$ and $\omega_B$ some K\"ahler class in $B$ and $t$ a sufficiently small positive real number. To assure that the linear system $|n\sigma+\pi^*\eta|$ contains an irreducible surface $C$ one has to demand that the linear system $|\eta|$ be base-point free in $B$ and that
$\eta-nc_1(B)$ is effective in $B$ \cite{FMW1}.

The second Chern classes of $X$ and $V$ are given by \cite{FMW1}
\begin{eqnarray}
c_2(X)&=&12\sigma\cdot \pi^*c_1(B)+\pi^*(c_2(B)+11c_1(B)^2)\nonumber\\
c_2(V)&=&\pi^*\eta\cdot \sigma-\frac{n^3-n}{24}\pi^*c_1(B)^2+\frac{1}{2}(\mu^2-\frac{1}{4})n\pi^*\eta\cdot(\pi^*\eta-n\pi^*c_1(B)).\nonumber
\end{eqnarray}

As we are interested in solving the topological constraint with spectral cover bundles
and $[W]=0$ (cf. above) we need to find another class of $\omega$-stable bundles. For this
let us consider pullback bundles which have been also studied in \cite{ACext}. However,
in this note we will not assume that the base of the elliptically fibered Calabi-Yau threefold has
an ample anti-canonical line bundle as it was assumed in the proof of Lemma 5.1 in \cite{ACext}.
Note that in \cite{ACext} the loss of generality in $B$ allowed to freely specify $t$ in $\omega=t\omega_0+\omega_B$ which was required to construct stable extension bundles in a second step.

Let $X$ be an elliptic Calabi-Yau threefold $\pi\colon X\to B$ with section $\sigma$ and let $E$ be
a given $\omega_B$-stable rank $r$ vector bundle on $B$ (a rational surface) with $c_1(E)=0$ and $c_2(E)=k$. We get the following result:

\begin{thm}
$\pi^*E$ is stable on $X$ with respect to $\omega=t\omega_0+\pi^*\omega_B$ for $t$ sufficiently small.
\end{thm}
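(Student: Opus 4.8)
The plan is to test stability directly against saturated subsheaves and to exploit that, for small $t$, the term $(\pi^*\omega_B)^2$ dominates the expansion $\omega^2 = t^2\omega_0^2 + 2t\,\omega_0\wedge\pi^*\omega_B + (\pi^*\omega_B)^2$, so that the leading contribution to the degree is the fibre degree. Since $c_1(\pi^*E)=\pi^*c_1(E)=0$ we have $\mu_\omega(\pi^*E)=0$, and it suffices to show $\deg_\omega\mathcal{F}<0$ for every saturated coherent subsheaf $\mathcal{F}\subset\pi^*E$ with $0<\operatorname{rk}\mathcal{F}=s<r$ (saturating only raises the degree). Writing $F$ for the class of a fibre and $v_B=\int_B\omega_B^2>0$, one has $(\pi^*\omega_B)^2=v_B\,F$, hence
\[
\deg_\omega\mathcal{F}=a_0+a_1 t+a_2 t^2, \quad a_0=v_B\,(c_1(\mathcal{F})\cdot F),\ \ a_1=2\,c_1(\mathcal{F})\cdot\omega_0\cdot\pi^*\omega_B,\ \ a_2=c_1(\mathcal{F})\cdot\omega_0^2 .
\]
First I would analyse $a_0$. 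Restricting to a general fibre gives $\mathcal{F}|_F\subset(\pi^*E)|_F\cong\mathcal{O}_F^{\oplus r}$; since the trivial bundle on the elliptic curve $F$ is semistable of slope $0$, any subsheaf has non-positive degree, so $a_0=v_B\deg(\mathcal{F}|_F)\le 0$. If $a_0<0$ then $\deg_\omega\mathcal{F}<0$ for $t$ small. The delicate case is $a_0=0$, i.e. $\deg(\mathcal{F}|_F)=0$.

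In that boundary case I would show that $\mathcal{F}$ is, up to vertical corrections, pulled back from $B$. A degree-$0$ subbundle of the trivial bundle on an elliptic curve is itself trivial: the only degree-$0$ line bundle admitting a nonzero map to $\mathcal{O}_F$ is $\mathcal{O}_F$, and the higher-rank statement follows from Atiyah's classification of semistable degree-$0$ bundles. Hence over the open locus $U\subset B$ of smooth fibres the saturation of $\mathcal{F}|_F$ varies as a constant subspace of $E_b$, defining a rank-$s$ subsheaf $G=\pi_*\mathcal{F}\subset\pi_*\pi^*E=E$ with $\pi^*G\cong\mathcal{F}$ over $\pi^{-1}(U)$. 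Thus $c_1(\mathcal{F})=\pi^*c_1(G)+\Theta$ with $\Theta$ supported on fibres over $B\setminus U$; as $\pi^*\omega_B$ restricts to zero on any subvariety contained in a fibre, such vertical classes satisfy $\Theta\cdot\omega_0\cdot\pi^*\omega_B=0$, so
\[
a_1=2\,\pi^*c_1(G)\cdot\omega_0\cdot\pi^*\omega_B=2\,(\omega_0\cdot F)\,\deg_{\omega_B}G .
\]
Because $E$ is $\omega_B$-stable and $0<s<r$, the strict inequality $\deg_{\omega_B}G=s\,\mu_{\omega_B}(G)<0$ holds, while $\omega_0\cdot F>0$; therefore $a_1<0$ and again $\deg_\omega\mathcal{F}=t(a_1+a_2t)<0$ for small $t$. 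In particular there is no degenerate sub-case $a_0=a_1=0$, since that would force $\mu_{\omega_B}(G)=0=\mu_{\omega_B}(E)$ for a proper $G$.

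The remaining, and main, difficulty is uniformity of the threshold: each $\mathcal{F}$ yields its own bound on $t$, and one must produce a single $\lambda_0$, equivalently a single $t_0>0$, that works for all subsheaves simultaneously. I would settle this by a boundedness argument: the saturated subsheaves of $\pi^*E$ that are $\omega_t$-destabilizing for some $t$ in a fixed interval $]0,t_1]$ form a bounded family, so their first Chern classes, and hence the triples $(a_0,a_1,a_2)$, take only finitely many values. For each such value the sign analysis above gives $a_0<0$, or $a_0=0$ and $a_1<0$, so a common $t_0>0$ renders every $\deg_\omega\mathcal{F}$ negative. Equivalently one may phrase the result as an adiabatic-limit statement: $\omega_t$-stability for $0<t\ll 1$ is governed first by the fibre slope and then, among fibre-degree-$0$ subsheaves, by the base slope, both of which are controlled by the semistability of $\mathcal{O}_F^{\oplus r}$ and the stability of $E$. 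The technical heart of the write-up is thus the combination of the fibrewise triviality of degree-$0$ subbundles with the boundedness needed to make the estimate uniform; the latter is where the hypotheses that $B$ is a (projective, rational) surface and $E$ a fixed bundle enter.
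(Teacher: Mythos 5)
Your strategy is, in substance, the one underlying the paper's proof: the paper simply cites Lemma 4.1 of Andreas--Curio (math.AG/0611762) for the sign statements ($c_1(F)\cdot\pi^*\omega_B^2\leq 0$ and $c_1(F)\cdot\omega_0\cdot\pi^*\omega_B<0$ for every proper subsheaf) and Lemma 7.2 of Friedman--Morgan--Witten for the uniform bound that makes the threshold in $t$ independent of the subsheaf, and your $t$-expansion, fibrewise analysis and boundedness step are an attempt to reconstruct exactly these two ingredients. However, your reconstruction of the first ingredient contains a genuine error. In the boundary case $a_0=0$ you write $c_1(\mathcal{F})=\pi^*c_1(G)+\Theta$ with $\Theta$ supported on fibres over $B\setminus U$, and dismiss $\Theta$ on the grounds that $\pi^*\omega_B$ vanishes on subvarieties contained in a fibre. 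This is false for the classes that actually occur: $B\setminus U$ contains the discriminant locus, a \emph{curve} in $B$ (of class $12c_1(B)$, in particular nonempty), so $\pi^{-1}(B\setminus U)$ is a divisor in $X$ whose components are surfaces dominating curves in $B$; no divisor can be contained in a single one-dimensional fibre. For such a component one has, e.g., $\pi^*\Delta_i\cdot\omega_0\cdot\pi^*\omega_B=(\Delta_i\cdot\omega_B)(F\cdot\omega_0)>0$. Worse, your own construction produces the correction with the unfavourable sign: the adjunction map $\pi^*\pi_*\mathcal{F}\to\mathcal{F}$ is injective with torsion cokernel, so $\Theta$ is \emph{effective}, and all you can conclude is $a_1\geq 2(\omega_0\cdot F)\deg_{\omega_B}G$, which is useless. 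The jumping locus of $\mathcal{F}$ inside $U$ can contribute further divisorial components of the same kind.

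The step is repairable, but by proving $\Theta=0$ rather than that $\Theta$ is numerically invisible. Let $\bar G$ be the saturation of $\pi_*\mathcal{F}$ in $E$. Since $E/\bar G$ is torsion-free on the smooth surface $B$ and $\pi$ is flat, $\pi^*(E/\bar G)$ is torsion-free, so $\pi^*\bar G$ is a \emph{saturated} subsheaf of $\pi^*E$; by your fibrewise triviality argument it agrees with $\mathcal{F}$ over a dense open subset of $X$, and two saturated subsheaves of a torsion-free sheaf which agree generically coincide. Hence $\mathcal{F}=\pi^*\bar G$, so $c_1(\mathcal{F})=\pi^*c_1(\bar G)$ on the nose and $a_1=2(\omega_0\cdot F)\deg_{\omega_B}\bar G<0$ by stability of $E$; this is precisely the content of the lemma the paper cites. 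Separately, your uniformity step is asserted rather than proved: boundedness of the potential destabilizers for $t\in\left]0,t_1\right]$ is not a formal consequence of Grothendieck's lemma, because the polarization $t\omega_0+\pi^*\omega_B$ degenerates as $t\to 0$ (the nef class $\pi^*\omega_B$ is not K\"ahler); controlling the normalized degrees $c_1(\mathcal{F})\cdot\pi^*\omega_B^i\cdot\omega_0^{2-i}/\mathrm{rank}\,\mathcal{F}$ uniformly is exactly what Lemma 7.2 of Friedman--Morgan--Witten supplies, and the cleanest way to close both gaps is to cite the two lemmas, as the paper does.
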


\begin{proof} Let $F$ be a subsheaf of $\pi^*E$ with $0<{\rm rank}\ F<r$. We need to show that $\mu(F)<0$. From Lemma 4.1 in \cite{ACext} we get for a subsheaf $F$ of $\pi^*E$ that $c_1(F)\cdot \pi^*\omega_B^2\leq 0$ and $c_1(F)\cdot \omega_0\cdot \pi^*\omega_B<0$. Now using Lemma 7.2 in \cite{FMW2} we have for all nonzero subsheaves $F$ of $\pi^*E $ that $\frac{c_1(F)\cdot\pi^*\omega_B^i\cdot \omega_0^{2-i}}{{\rm rank}\ F}\leq A$ with $0\leq i\leq 2$ and $A$ only depending on $\pi^*E$, $\omega_0$ and $\pi^*\omega_B$. We can thus conclude that for $t$ chosen sufficiently small $\mu(F)<0$. 
\end{proof}

So far we have constructed a class of polystable vector bundles $V\oplus\pi^*E$ with $V$ a given
spectral cover bundle. Now if we impose the following conditions (set $c_i(B)=c_i$)
\begin{eqnarray}
\eta&=&12c_1\\
k&=&c_2+11c_1^2+\frac{n^3-n}{24}c_1^2-(\mu^2-\frac{1}{4})\frac{n}{2}\eta(\eta-nc_1)\label{insta}
\end{eqnarray}
then $c_2(X)=c_2(V)+c_2(\pi^*E)$ is satisfied and thus gives examples
of polystable bundles which satisfy the hypothesis of Theorem 2.1.

It remains to prove that a $\omega_B$-stable rank $r$ bundle $E$ on $B$ with
$c_1(E)=0$ and $c_2(E)=k$ and $k$ given by (\ref{insta}) actually exists on $B$. In \cite{Art91} it is shown that the moduli space of vector bundles of rank $r$, $c_1(E)=0$ and $c_2(E)=k$ is not empty
if $k>{\rm max}(1,p_g)(r+1)$ (where $p_g$ denotes the geometric genus of $B$). Thus in our case we have to assure that (using $\eta=12c_1$ and $c_1^2+c_2=12$ for $B$ rational)
\begin{equation}\label{inst}
12+\Big(10+\frac{n^3-n}{24}-(\mu^2-\frac{1}{4})(72-6n)n\Big)c_1^2>r+1.
\end{equation}
An example is given by spectral bundles with $n$ odd and $\mu=\pm\frac{1}{2}$ or with $n$ even, $\mu=0$ and $c_1$ even
solve this condition ($r\leq 8$ is assumed, for breaking the hidden $E_8$ gauge symmetry).

\section{Towards GUT Model Building}

The heterotic string provides a natural setting for building realistic models of particle physics.
In recent years much progress has been made in the construction of physically interesting
Calabi-Yau compactifications with stable vector bundles which lead to GUT models and models with
a particle content close to the Standard Model (for recent progress see, \cite{AnCu07}, \cite{BouDo06}, \cite{BraHeOv}). However, it is an open problem
whether these phenomenologically interesting models can be generalized to torsional heterotic backgrounds.
If this were possible, then it would allow to address the question of solving the anomaly equation and not only
its integrability condition (which is necessary but not sufficient) as it is done in most of the present constructions,
and it would also allow to address the question of moduli stabilization.

In this section we will study this problem for some three generation GUT models. We will
construct a polystable bundle which leads to three generation models of unbroken gauge group
$SU(5)$ or $E_6$ in the visible sector and which then can be perturbed using the result of Theorem 2.1
to a solution of the Strominger system and the equation of motion (in lowest order of $\alpha'$, cf. above).
Therefore the GUT models qualify as torsional heterotic backgrounds.

The net number of generations $N_{gen}$ is determined by the index of the Dirac operator coupled to the
visible sector gauge bundle $V_{vis}$ and is given by the third Chern class of $V_{vis}$
$$N_{gen}=\frac{1}{2}c_3(V_{vis}).$$
For spectral cover bundles one finds (cf. \cite{And, Cur, Dia}) that $N_{gen}$ is given by
$$N_{gen}=\mu\pi^*\eta(\pi^*\eta-n\pi^*c_1(B))\sigma.$$
Now obviously the above examples with $\eta=12c_1$ (and say $\mu=\pm \frac{1}{2}$) lead to rather large
generation numbers. To get more flexibility for GUT model building we will consider now another possibility.
As above, we will construct a polystable bundle
$$V=V_{vis}\oplus V_{hid},$$
with $V_{vis}$ given by a spectral cover $SU(n_1)$ bundle. For the hidden bundle we take the polystable bundle
$$V_{hid}=W\oplus \pi^*E,$$
with $W$ an $SU(n_2)$ spectral cover bundle and $\pi^*E$ an $SU(r)$ $\omega_B$-stable pullback bundle.
Alternatively, one could consider a stable deformation of $W\oplus\pi^*E$ which would then lead to 
an irreducible structure group of the hidden bundle and thus give different unbroken hidden gauge groups. 
Necessary and sufficient conditions for the existence of such a deformation have been obtained for polystable bundles in \cite{Huy}. 

The corresponding Chern classes are given by $c_2(\pi^*E)=k$ and
\begin{eqnarray}
c_2(V_{vis})&=&\pi^*\eta_1\cdot \sigma-\frac{n_1^3-n_1}{24}\pi^*c_1(B)^2+\frac{1}{2}(\mu_1^2-\frac{1}{4})n_1\pi^*\eta_1\cdot(\pi^*\eta_1-n_1\pi^*c_1(B))\nonumber\\
c_2(W)&=&\pi^*\eta_2\cdot \sigma-\frac{n_2^3-n_2}{24}\pi^*c_1(B)^2+\frac{1}{2}(\mu_2^2-\frac{1}{4})n_2\pi^*\eta_2\cdot(\pi^*\eta_2-n_2\pi^*c_1(B)).\nonumber
\end{eqnarray}
Now similar as above, if we impose the conditions
\begin{eqnarray}
\eta_1+\eta_2&=&12c_1 \label{wb}\\
k&=&c_2+11c_1^2+\sum_{i=1}^2\Big[\frac{n_i^3-n_i}{24}c_1^2-(\mu_i^2-\frac{1}{4})\frac{n_i}{2}\eta_i(\eta_i-n_ic_1)\Big]
\end{eqnarray}
then $c_2(X)=c_2(V_{vis})+c_2(V_{hid})$ is satisfied and thus gives examples
of polystable bundles which satisfy the hypothesis of Theorem 2.1. Moreover, as above in (\ref{inst}) we have
here also to assure that $k>r+1$ which guarantees the existence of stable rank $r$ bundles $E$ on $B$ with the Chern classes $c_1(E)=0$ and $c_2(E)=k$.

For concreteness, we will now show that the construction leads to three generation models of unbroken
gauge groups $E_6$ and $SU(5)$ in the visible sector, that is, we specify $SU(n_1)$ bundles $V_{vis}$ with $n_1=3,5$. For simplicity, we will fix the base $B$ to be given by
the Hirzebruch surface ${\mathbb F}_1$ with Chern classes $c_1({\mathbb F}_1)=2b+3f, c_2({\mathbb F}_1)=4$ (and with the intersection relations $b^2=-1$, $bf=1$ and $f^2=0$). Moreover, we will assume that $r\leq 8$.
\begin{ex}
We obtain an unbroken $E_6$ gauge group for $n_1=3$ and $\eta_1=11b+10f$ and $\mu_1=\pm \frac{1}{2}$ giving three generations. Therefore by (\ref{wb}) we get $\eta_2=13b+26f$ and the instanton number of $E$ is then given by
\begin{equation}
k=100+\frac{n_2^3-n_2}{24}c_1^2-(\mu_2^2-\frac{1}{4})\frac{n_2}{2}\eta_2(\eta_2-n_2c_1)
\end{equation}
thus if we take $n_2=3$ and $\mu_2=\pm\frac{1}{2}$ giving $k=108$ and all conditions are satisfied.
\end{ex}
\begin{ex}
An unbroken $SU(5)$ gauge group in the visible sector is obtained for $n_1=5$ and $\eta_1=12b+15f$ and $\mu_1=\pm \frac{1}{2}$ giving three generations. By (\ref{wb}) we get $\eta_2=12b+21f$ and the instanton number of $E$ is then given by
\begin{equation}
k=132+\frac{n_2^3-n_2}{24}c_1^2-(\mu_2^2-\frac{1}{4})\frac{n_2}{2}\eta_2(\eta_2-n_2c_1)
\end{equation}
as $\eta_2\equiv c_1 \ ({\rm mod}\ 2)$ we can take here $n_2=2$ and $\mu_2=0$ giving $k=197$ which guarantees
the existence of the required rank two bundle $E$ on $B$.
\end{ex}
\vskip 1cm
We thank the SFB 647 ``Space-Time-Matter, Analytic and Geometric Structures'' and the Humboldt-University Berlin for financial support and the Free-University Berlin for hospitality. M.G.F. thanks the Max-Planck-Institute for Mathematics and the Centre for Quantum Geometry of Moduli Spaces for financial support.

\end{document}